\newtheorem{propn}{Proposition}{}{}
\newtheorem{thm}{Theorem}{}{}
\newtheorem{lem}{Lemma}{}{}
{}{}
{}{}
\newtheorem{Def}{Definition}{}{}
{}{}
{}{}
\newenvironment{proof}{{\bf Proof:}}{~$\Box$ \\}
\newcommand{\commentout}[1]{}
\newcommand{\Real}{\mathbb{ R}}
\newcommand{\Complex}{\mathbb{ C}}
\newcommand{\modulus}[1]{|\!#1\!|}
\newcommand{\norm}[1]{|\!| #1 |\!|}
\newcommand{\ket}[1]{|#1\rangle}
\newcommand{\bra}[1]{\langle #1|}
\newcommand{\inp}[2]{\langle #1|#2\rangle}
\newcommand{\inpr}[3]{\langle #1|#2|#3\rangle}
\newcommand{\conj}[1]{\overline{#1}}
\newcommand{\hconj}[1]{{#1}^{\dagger}}
\newcommand{\tr}{\text{\tt Tr}}
\newcommand {\pj}[1]{\ket{#1}\!\bra{#1}}
\newcommand {\pjx}[2] {\ket{#1}\bra{#2}}
\newcommand{\acti}{\!\cdot\!}
\newcommand{\unitary}[1]{{\mathcal{U}_#1}}
\def\dmn#1#2{{#1^{(#2)}}}
\newcommand{\be}{\begin{enumerate}}
\newcommand{\ee}{\end{enumerate}}
\newcommand{\beq}{\begin{equation}}
\newcommand{\eeq}{\end{equation}}
\newcommand{\beqx}{\begin{displaymath}}
\newcommand{\eeqx}{\end{displaymath}}
\newcommand{\beqa}{\begin{eqnarray}}
\newcommand{\eeqa}{\end{eqnarray}}
\newcommand{\beqax}{\begin{eqnarray*}}
\newcommand{\eeqax}{\end{eqnarray*}}
\title{Metrics and pseudometrics on unitary groups with applications in quantum information processing}
\author{
Manas ~K. ~Patra\\ Department of Data Science and Analytics\\Central University of Rajasthan, NH-8\\Dist-Ajmer-305817, India }
\date{}
\begin{document}
\maketitle

\begin{abstract}
Metrics and pseudometrics are defined on the group of unitary operators in a Hilbert space. Several explicit formulas are derived. A special feature of the work is investigation of pseudometrics in unitary groups.  The rich classes of pseudometrics have many interesting applications. Three such applications, distinguishibility of unitary operators, quantum coding and quantum search problems are discussed.  
\end{abstract}
\maketitle
\section{Introduction}
In this  work I investigate metrics and pseuudometrics on on groups of unitary operators. A preliminary version of this work  appeared a while ago\cite{Patra06}. It had several typographic errors and lacked details. For one reason or other I could not get back to polish it. The present work is a thorough revision and significant extension of that work. Several new results are added. A notable departure from the earlier work is the emphasis and exploration of rich and varied classes of {\em pseudometrics}. While metrics on various mathematical structures have been extensively studied pseudometrics have received scant attention. Pseudometrics, unfettered by restrictions of strict positivity offer a much richer variety. \footnote{Recall taht a metric on a set $X$ is a real-valued function $d$ on $X\times X$ satisfying $d(x, y)=d(y,x)$ (symmetry), $d(x, y) \leq d(x, z)+d(z, y)$(triangle inequality) and $d(x, y) \geq 0$ and equals 0 if and only if $x=y$ (strict positivity). For a pseudometric strict positivity is not required.}

The usual approach to metrics  on the space of operators on a Hilbert space is to start with a norm,  $A  \rightarrow \norm{A}$ on the operators and then define a metric $d(A, B)= c \norm{A-B}, \; c > 0$. 
This approach is particularly useful if we restrict to affine subspaces of
the space of operators. This is because the norm on the ambient space
induces a norm on the space of operators and the metric is defined in
terms of the latter. However, if we restrict to some {\em subset} of
operators which may not constitute a subspace, the norm-induced metric
may not seem very natural. For two operators $A$ and $B$ the
difference $A-B$ whose norm defines the distance between $A$ and $B$
may take us outside the subset. But the concept of a metric does not
depend upon algebraic operations. In particular, if the relevant subset
is a group we are often interested in {\em invariant} metrics. That is
metrics that remain invariant under left (right) translations by the
group operations. Of course, invariant metrics are known to exist for any compact group \cite{Price}. The (pseudo)metrics introduced in the paper are induced by some norms.  Since  a pseudometric $\delta(U,V)$ maybe 0, these null sets are sometimes interesting for quantum information theory. 

In Section II, I introduce pseudometrics and metrics  on  groups of unitary operators. The pseudometrics are induced by norms on the convex set of density matrices. Hence their properties depend on those of the norm.  A metric $d$ on the projective unitary groups $PU_n$ \footnote{Informally, we obtain $PU_n$ as a quotient of unitary group $\mathcal{U}_n$ identifying unitary operators $U$ and $cU$, $c| = 1$.}\cite{Grove} is obtained by taking {\em supremum} over all states for the trace norm. Many algebraic and geometric properties of the metric are proved. An explicit formula for the metric is derived. Using this formula it is easy to derive an explicit formula between two unitary operators that can be written as tensor products. I then demonstrate the equivalence of $d$ to other metrics obtained by using different norms. Finally, several interesting classes of pseudometrics are discussed. In particular, pseudometrics induced by the standard norms on the set of separable states turns out to be metric on $PU_n$. 

 Section III deals with applications. First, it is is shown that two unitary operators are distinguishable if and only if the distance between them is 1: $d(U, V) =1$. Next I use certain pseudometrics to chatracterize stabilizer subspaces of an abelian group. These subspaces are the {\em null} subspaces of the metric. The third application to quantum search algorithm shows that search problem can be formulated and solved as a problem of  approximation in the metric. 

I conclude the paper with some remarks on other avenues to explore  using the infinite varieties of pseudometrics. 

\section{Metrics and pseudometrics on the unitary group}
First, let us fix some notation. In the following, $\mathcal{H}$ will
denote a complex Hilbert space with a fixed inner product $<,>$. The {\em rays} in $\mathcal{H}_n$ correspond to pure states of a quantum system of dimension $n$. The relevant sets will be sometimes subscripted with the dimension $n$ of the underlying Hilbert space. Thus $\mathcal{U}_n$ denotes the group of
unitary operators in $B(\mathcal{H}_n)$, where the latter denotes the
algebra of linear operators on $\mathcal{H}_n$. The corresponding subset
of hermitian operators will be denoted by
$L(\mathcal{H}_n)$. The special unitary group $S\mathcal{U}_n\subset
\mathcal{U}_n$ is the subgroup of operators with determinant 1. I use the
standard notation \(\Complex \text{ and } \Real\) for the field
of real and complex numbers with usual topology. In $\Complex^n$ the
standard inner product is used. Thus, if \(\alpha = (x_1, \ldots, x_n)^T
\text { and } \beta = (y_1, \ldots, y_n)^T \in \Complex^n\), where
$A^T$ denotes the transpose of the matrix $A$, then
\[ \inp{\alpha}{\beta} \equiv \sum_i \conj{x}_iy_i \]
When the dimension $n$ is fixed or can be inferred we drop the subscript. Finally, $I$ (or $I_n$ if the dimension needs to be specified) will denote the unit operator on $\mathcal{H}$. The Hilbert space norm induces a norm $A\rightarrow
 \norm{A}$, on the space of operators on $\mathcal{H}$, defined by,
\[ \norm{A} = \max_{\norm{\psi}=1} \norm {A\psi} 
 \]
 The induced norm is called the operator norm. 
 If $A$ is normal ($A \text{ and } A^\dagger$ commute) then \(\norm{A}= \max\{|\lambda| \:| \:\lambda \text{ an
 eigenvalue of $A$} \}\). Properties of the operator norm
  may be found in \cite{Bhatia}. The norm induces a metric $d_O(A,B) = c\norm{A-B}$ on
 the space of operators where $c > 0$  is some convenient normalization constant.  We observe that if $A,B$ are unitary operators this metric is not even {\em projective} invariant, that is, $d_O(A,B) \neq d_O(e^{ix}A, B)$ for real $x$. Let us consider the distance between two unitary operators from an operational perspective. Suppose we want to test how close are two unitary operators, $U, V$. In a quantum system we can only obtain probability distributions corresponding to various states and experimental arrangements. A naive `experiment' would be to apply say $V$ first to a (pure) state $\ket{\psi}$ followed by the application of $\hconj{U}$ and then make a projective measurement $\{P_\psi, P_{\psi^\perp}\}$ where $P_\psi$ is the orthogonal projection onto $\ket{\psi}$ and $ P_{\psi^\perp} = I - P_{\psi}$, the complementary projection. So if the state $\ket{\psi}$ is fixed then $|\inpr{\psi}{\hconj{U}V}{\psi}|^2 $  is the transition probability and we can use a probability metric to measure the distance between the distribution $(1,0)$ and $(|\inpr{\psi}{\hconj{U}V}{\psi}|^2 , 1-|\inpr{\psi}{\hconj{U}V}{\psi}|^2)$. We will use the total variational distance between two probability measures which has the  simple expression $d_\psi^2(U,V) \equiv 1- |\inpr{\psi}{\hconj{U}V}{\psi}|^2$ in our case. $d_\psi$ is actually a pseudometric\footnote{ A pseudometric on a set $S$ is function $d: S \times S \rightarrow \Real_+$ that satisfies i. $d(x, x) = 0, d(x, y) = d(y, x) \text{ and } d(x, y) \leq d(x, z) + d(z, y)$.}. We will see this in the next subsection when we generalize to mixed quantum states. 
 
\subsection{General construction of metrics}
Let $\mathcal{D}_n$ be the set of positive semidefinite operators in $L(\mathcal{H}_n)$ with trace 1. $\mathcal{D}_n$ is the set of mixed states. The action of a unitary operator $U$ on $\mathcal{D}_n$ is given by
\beq
U\acti \rho = U\rho U^\dagger 
\eeq
Let $E,F$ be two general quantum operations, that is, $E,F$ are completely positive maps on $\mathcal{D}_n$ or {\em superoperators}\cite{Nielsen}. In the spirit of defining an {\em operational} distance between $E$ and $F$ we would like to characterize it by the effect the operators have on the quantum state as in the toy example above. Thus define 
\beq\label{eq:basic_metric}
\Delta_\rho(E, F)  = \Delta(E\acti\rho, F\acti\rho)
\eeq
where $\Delta$ is metric on $\mathcal{D}_n$. 
Our starting point is the following simple lemma. 
\begin{lem}\label{lem:pseudo1}
The function  $\Delta_\rho$ is a pseudometric on the space of quantum operations. Thus 
\be
\item
$\Delta_\rho(E, F) \geq 0 $
\item
$\Delta_\rho(E, F) = \Delta_\rho(F,E)$
\item
$\Delta_\rho(E, F) \leq \Delta_\rho(E, G) + \Delta_\rho(G, F)$
\ee
Further, if $\Delta$ is invariant with respect to the unitary action---$\Delta(U\acti\rho, \: U\acti\sigma) = \Delta(\rho, \sigma)$---then $\Delta_\rho(UE, UF) = \Delta_\rho(X, Y)$ (invariant with respect to left multiplication). In particular, if $X, Y$ are unitary operators: $\Delta_\rho(X, Y) = \Delta_\rho (I, \hconj{X}Y)$.  
\end{lem}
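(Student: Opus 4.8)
The plan is to obtain the three pseudometric axioms by transporting them from $\Delta$ along the assignment $E\mapsto E\acti\rho$. The preliminary observation I would record is that for any completely positive trace-preserving $E$ and any $\rho\in\mathcal{D}_n$ the image $E\acti\rho$ again lies in $\mathcal{D}_n$, so that $\Delta(E\acti\rho, F\acti\rho)$ is well defined and the metric properties of $\Delta$ may be applied to its arguments. Nonnegativity and symmetry are then immediate, since $\Delta_\rho(E,F)=\Delta(E\acti\rho,F\acti\rho)\geq 0$ and $\Delta(E\acti\rho,F\acti\rho)=\Delta(F\acti\rho,E\acti\rho)$. For the triangle inequality I would insert the intermediate point $G\acti\rho\in\mathcal{D}_n$ and apply the triangle inequality of $\Delta$ to $E\acti\rho,\,G\acti\rho,\,F\acti\rho$, obtaining $\Delta_\rho(E,F)\leq\Delta_\rho(E,G)+\Delta_\rho(G,F)$ verbatim.

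I would then emphasise why only a pseudometric results. Vanishing $\Delta_\rho(E,F)=0$ forces, by strict positivity of $\Delta$, only the equality $E\acti\rho=F\acti\rho$; that is, $E$ and $F$ need agree on the single state $\rho$ and may differ elsewhere, so the implication $\Delta_\rho(E,F)=0\Rightarrow E=F$ fails. This is precisely the loss of strict positivity characterising a pseudometric, and it is this null set that the later applications exploit.

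For the invariance clause the key is the composition identity $(UE)\acti\rho=U\acti(E\acti\rho)$, which I would verify directly by writing $(UE)\acti\rho=U(E\acti\rho)\hconj{U}$, i.e.\ first applying $E$ and then conjugating by $U$. Granting unitary invariance of $\Delta$, this gives $\Delta_\rho(UE,UF)=\Delta\bigl(U\acti(E\acti\rho),\,U\acti(F\acti\rho)\bigr)=\Delta(E\acti\rho,F\acti\rho)=\Delta_\rho(E,F)$, which is left invariance. For the final identity with unitary $X,Y$ I would left-multiply by $\hconj{X}$ and use that, as operations on states, $(\hconj{X}X)\acti\rho=\rho$ while $(\hconj{X}Y)\acti\rho=(\hconj{X}Y)\rho(\hconj{X}Y)^{\dagger}$ is conjugation by the single unitary $\hconj{X}Y$; together these yield $\Delta_\rho(X,Y)=\Delta_\rho(\hconj{X}X,\hconj{X}Y)=\Delta_\rho(I,\hconj{X}Y)$.

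All the computations are routine; the only point demanding care is the bookkeeping of the composition law $(UE)\acti\rho=U\acti(E\acti\rho)$ and, in the unitary case, the check that the composite $\hconj{X}Y$ acts on states exactly as conjugation by the single unitary $\hconj{X}Y$. Once this composition structure is set up, invariance of $\Delta$ does the remaining work and no genuine obstacle remains.
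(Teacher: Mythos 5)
Your proposal is correct and follows the same route as the paper, whose proof is simply the one-line remark that all assertions follow from the defining properties of $\Delta$ and its unitary invariance; you have written out exactly those routine verifications, including the composition identity $(UE)\acti\rho = U\acti(E\acti\rho)$ and the left-multiplication step $\Delta_\rho(X,Y)=\Delta_\rho(\hconj{X}X,\hconj{X}Y)=\Delta_\rho(I,\hconj{X}Y)$. No gaps; your explicit treatment of why strict positivity fails is a useful addition the paper leaves implicit.
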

\begin{proof}
All assertions easily follow from the defining properties of the metric $d$ and its invariance properties. 
\end{proof}
We will assume henceforth that the metric $\Delta$ is unitary invariant. Further, unless explicitly stated otherwise the quantum operations will be assumed to be unitary. Clearly,  $\Delta_\rho(X,Y) = 0$ if and only if $\rho$ satisfies $\hconj{X}Y\acti\rho = \rho$. Hence, if the action of the unitary group is linear (on the space of operators) which we also assume then $\rho$ is an eigenstate of $\hconj{X}Y$ with eigenvalue 1 or alternatively, it is a {\em fixed point}. Any unitary invariant metric on the set of density operators $\mathcal{D}$ will induce a corresponding pseudometric on $\mathcal{U}$ with the above properties. Specifically, we use {\em trace} norm: $\norm{A}_1 = \tr (|A|)$ (the trace of $|A|$) where $|A| = \sqrt{\hconj{A}A}$. 
\def\dmn#1#2{{#1^{(#2)}}}
The trace-norm induces a unitary invariant metric 
\[\dmn{d}{1}(\rho, \sigma) = \frac{\norm{\rho - \sigma}_1}{2}\]
which has close analogy with classical probability distance \cite{Nielsen}. We  use $d_\rho$ to denote the pseudometric induced by the metric $\dmn{d}{1}$ in the state $\rho$: 
\beq\label{eq:tr_dist}
d_\rho(E, F) = \dmn{d}{1}(E\!\cdot \!\rho, F\!\cdot \!\rho)
\eeq
% =  \dmn{d}{1}(\rho, U^\dagger V\!\cdot \!\rho) 
for superoperators $E,F$. In particular, for  unitary operators $U,V$ invariance of $\dmn{d}{1}$ implies $d_\rho(U,V) = d_\rho(I, U^\dagger V)$. If $\rho = \pj{\psi} $ is a pure state then write $d_\psi$ instead of $d_{\pj{\psi}}$. Using known properties of the trace norm \cite{Nielsen}
\beq \label{def:psi_dist}
d_{\psi} (U,V) = (1 - |\inpr{\psi}{U^{\dagger}V}{\psi}|^2 )^{1/2}
\eeq
I prove a general formula later for the $L_p$  norms that includes the above as a special case.  For pure states  $d_{\psi}$ satisfies the following useful relation. 

\beq \label{eq:basic_relation} 
%\begin{split}
 \frac{1}{2}\norm{(U-e^{ix}V)\psi}^2  \leq d_{\psi}^2(U,V) \leq  \norm{(U-V)\psi}^2
 %\end{split}
\eeq
where $x$ is some real number. 
To prove these relations note that for any real $y$ 
\begin{equation*} 
\norm{(U-e^{iy}V)\psi}^2 = \inpr{\psi}{(U-e^{iy}V)^{\dagger}(U-e^{iy}V)}{\psi} = 2(1-\text{Re}\inpr{\psi}{e^{iy}U^{\dagger} V}{\psi})
\end{equation*}
Here $\text{Re}(z)$ denotes the real part of the complex number $z$. Now there is a real number $x$ such that \(\inpr{\psi}{U^{\dagger} Ve^{ix}} {\psi} \) is positive. Then \(1-\text{Re}\inpr{\psi}{U^{\dagger} Ve^{ix}} {\psi} = 1-|\inpr{\psi}{U^{\dagger} V} {\psi}| \leq 1- |\inpr{\psi}{U^{\dagger} Ve^{ix}}{\psi}|^2 = d^2_{\psi}(U,V)\) and the first inequality follows. Moreover, 
\[
\begin{split}
d^2_{\psi}(U,V) & =  (1+ |\inpr{\psi}{U^{\dagger} V}{\psi}|)(1-|\inpr{\psi}{U^{\dagger} V}{\psi}|) \\
& \leq 2(1-\text{Re}(\inpr{\psi}{U^{\dagger} V}{\psi}) 
\end{split}
\]
since  $\text{Re}(\inpr{\psi}{U^{\dagger} V}{\psi}) \leq |\inpr{\psi}{U^{\dagger} V}{\psi}|\leq 1$. 

%%%%%%%%%%%%%%%%%%%%%%%%%%%%%%%%%%%%%%%%%%%%%%%%%%%%%%%%%%%%%%%%%%%%%%%%%%%%
According to  Lemma \ref{lem:pseudo1}, $d_\rho$ is a pseudometric and the following definition is the first step towards constructing a metric. Start with an arbitrary metric $\Delta$ on $\mathcal{D}$. Recall that $B(\mathcal{H})$ is the algebra of linear operators on a Hilbert space $\mathcal{H}$. 
\begin{Def}
For any two completely positive  maps (superoperators) $E, F$ on $B(\mathcal{H})$ define
\beq \label{def:uDist}
\Delta (E, F) = \sup_{\rho\in \mathcal{D}} \Delta_{\rho}(E, F) =  \sup_{\rho\in \mathcal{D}} \Delta(E\cdot\rho, F\cdot\rho) 
%= 1 -\min{\norm{\psi}=1}|\inpr{\psi}{ \hconj{U}V} {\psi}|^2
\eeq
%Call $D(U,V)$ the \text{\udist} between $U$ and $V$.
\end{Def}

We can replace $\sup$ (supremum) by $\max$ (maximum) since $\mathcal{D}$ is compact and the function  $\rho \rightarrow \Delta_{\rho}(E ,F)$ is continuous for fixed $E, F$. Suppose now the metric $\Delta$ is convex function on $\mathcal{D}$:
\[ 
\begin{split}
 \Delta(\lambda \rho_1 + (1 - \lambda) \rho_2, \lambda \sigma_1 + (1 - \lambda) \sigma_2)  \leq & \;
  \lambda \Delta(\rho_1 , \sigma_1)  + ( 1 - \lambda) \Delta ( \rho_2, \sigma_2), \\
& 0 \leq \lambda \leq 1. \\
\end{split}
\]
Then the $\sup$ in \eqref{def:uDist} occurs at an extreme point of $\mathcal{D}$, that is, a pure state. This follows easily from the fact that any state $\rho$ is a convex combination of pure states and convexity of $\Delta$. In particular, if the metric is induced by a norm on $B(\mathcal{H})$ then it is convex. Hence if $\Delta$ is the trace distance ($L_1$ metric) or the Frobenius distance ($L_2$ metric) the maximum of $d_\rho(E, F)$ occurs at a pure state.

We assume  that $\Delta = \dmn{d}{1}$, the $L_1$ metric induced by the trace or $L_1$ norm. This fixing will remain true in rest of the paper except when the equivalence with metrics induced by other norms is demonstrated. To simplify notation we use $d_\rho(U,V) = \dmn{d}{1}(\!\cdot\!U\rho, V\acti\rho)$ (instead of $d_\rho^{(1)}$) and $d(U, V) = \sup_{\rho} d_\rho(U,V)$. First note that $d(U,V) = d(I, \hconj{U}V)$. Hence it suffices to study the properties of the function $d(I,W)$ for a unitary
operator $W$. Further, we may restrict  $d_\rho(1, W)$ (as a function of states) to pure states since the maximum will occur in a such a state. Since $W$ is unitary its eigenvalues lie on the unit
circle. If $z_i =e^{ic_i}$ is an eigenvalues of $W$,  then $0\leq c_i
< 2\pi$ and the angles are read counterclockwise on the unit
circle. Writing an arbitrary vector in an orthonormal  basis of eigenvectors
of $W$ we see that the numerical range\footnote{Recall that the numerical range of an operator $A$ is the set of numbers $F_A = \{ \inpr{\psi}{A}{\psi} :\: \norm{\psi} = 1\} $. } of $W$ is the {\em
  convex} set
\[F_W = \{\sum_i |x_i|^2 e^{i\theta_i} :\: \sum_i |x_i|^2 =1\}\]
That is, $F_W$ is a convex 2-polytope or polygon whose vertices lie
on the unit circle. The  number 
\[ \delta^2( 0, F_W) = \min_{\ket{\psi}} \{ | \inpr{\psi}{A}{\psi}|^2 |\; \norm{\psi} = 1\} \]
is the square of the distance from the origin to the set $F_W$  in the complex plane.  The corresponding distance of $W$ from $I$, the unit matrix is $d(I, W) = \sqrt{ 1- \delta^2( 0, F_W)}$. In the figure below the 5 eigenvalues of $W$ correspond to the points \textsf{A, B, C, D, E, G} on the unit circle. The polygon enclosed by these points is the set $F_W$. Then $\delta( 0, F_W)$ is $|\textsf{OD}|$ and $D(1, W) = |AD| = \sin{(\frac{\theta_6 - \theta_1}{2})}$. 

\begin{figure}[h]
\centering
\includegraphics[scale= .5]{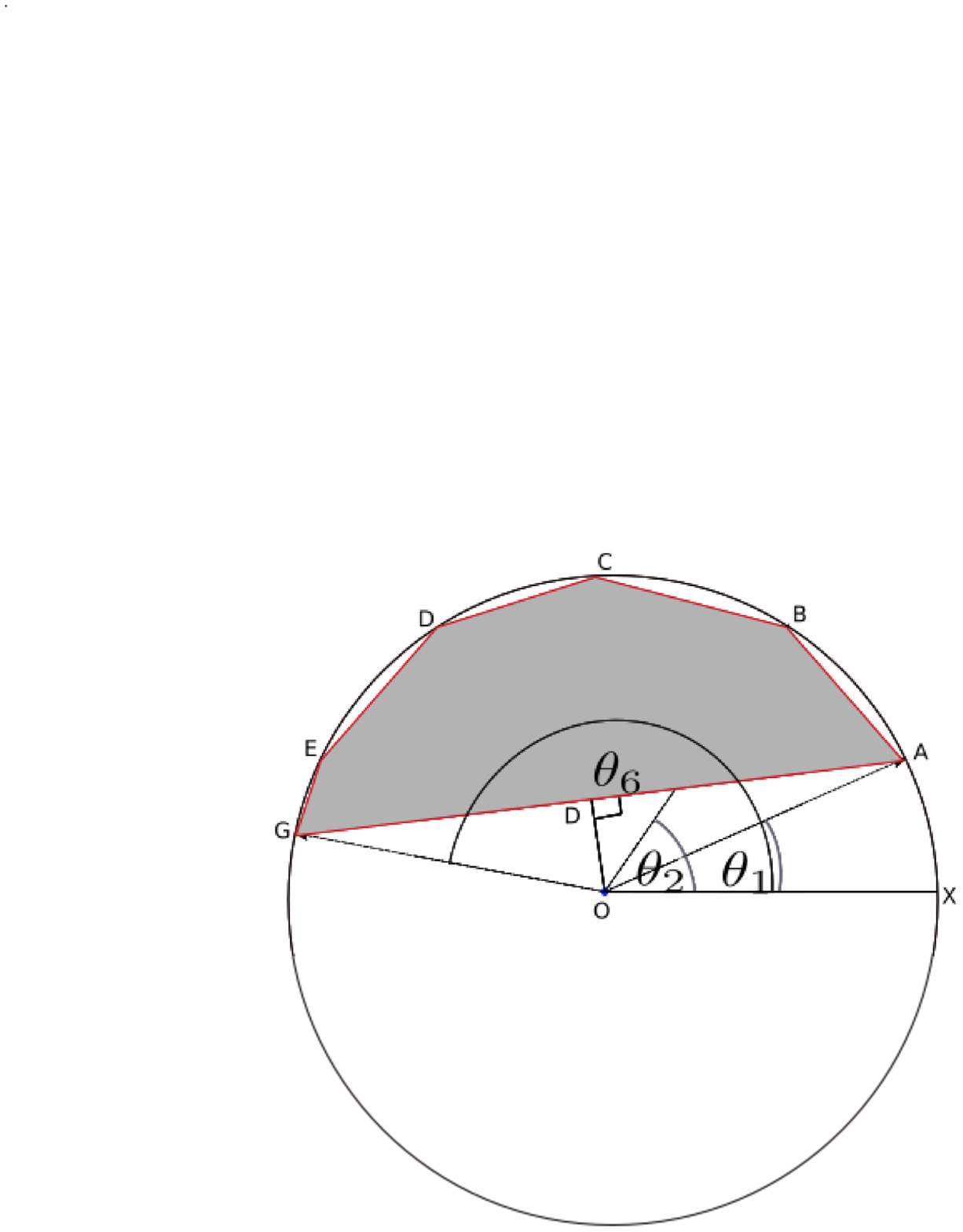}
\captionsetup{labelformat=empty}
\caption{Fig. 1}
\end{figure}
The above formula for $d(I, W)$ is intuitively obvious from the figure. In general we have the following theorem whose proof is given in the appendix. 
\begin{thm} \label{thm:eigenChar}
Let $z_i = e^{i\theta_i}, \; i= 1,\ldots n$ be the eigenvalues (possibly
with repititions) of a unitary operator $W$. Let $\theta_i$'s be ordered
such that $0\leq \theta_1\leq \theta_2\leq \cdots \leq \theta_n < 2\pi$. Let $C$ be smallest arc containing all the points $z_i$. There are two alternatives. Let $\alpha = l(C)$, the length of $C$. It is the angle subtended at the center. 
\be
\item
$C$ lies in the interior of a semicircle and hence $\alpha < \pi$. Then
\beq \label{eq:basic_formula}
d(I, W) = \sin{(\alpha/2)}
\eeq
\item
$C$ contains a semicircle: $\alpha \geq \pi$. Then $d(I, W) = 1$. 
\ee
\end{thm}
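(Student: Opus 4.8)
The plan is to translate everything into plane geometry. By the discussion preceding the statement, the numerical range $F_W$ is exactly the convex hull of the eigenvalues $e^{i\theta_i}$ lying on the unit circle, and $d(I,W) = \sqrt{1-\delta^2(0,F_W)}$, where $\delta(0,F_W) = \min_{p\in F_W}|p|$ is the Euclidean distance from the origin to this convex polygon (the minimum of $|\inpr{\psi}{W}{\psi}|$ is attained on $F_W$ by definition of the numerical range). So the whole problem reduces to computing that distance, and I would carry out a two-case analysis governed by whether the origin lies outside or inside $F_W$. First I would fix the geometry of the arc $C$: reading the angles cyclically, the complement of $C$ is the largest gap between consecutive eigenvalues, so $\alpha = 2\pi - g$ with $g$ that largest gap, and the two endpoints of $C$ are themselves eigenvalues, say $e^{ia}$ and $e^{i(a+\alpha)}$.

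For Case 1 ($\alpha < \pi$) I would argue that the chord $L$ joining the endpoints $e^{ia}$ and $e^{i(a+\alpha)}$ is the edge of $F_W$ facing the origin: every eigenvalue lies on the arc $C$, hence on one closed side of the line through $L$, while the origin lies strictly on the other side precisely because $\alpha<\pi$. Thus the line containing $L$ separates the origin from the interior of $F_W$, so the nearest point of $F_W$ to the origin lies on $L$. The foot of the perpendicular from the origin to $L$ is the midpoint $\tfrac12\bigl(e^{ia}+e^{i(a+\alpha)}\bigr) = e^{i(a+\alpha/2)}\cos(\alpha/2)$, at distance $\cos(\alpha/2)$; since $\alpha<\pi$ this foot lies in the interior of the segment, so it is genuinely the closest point and $\delta(0,F_W)=\cos(\alpha/2)$. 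Then $d(I,W)=\sqrt{1-\cos^2(\alpha/2)}=\sin(\alpha/2)$, the sign being correct because $\alpha/2\in[0,\pi/2)$. The degenerate case $\alpha=0$ (all eigenvalues equal) is subsumed and gives $d(I,W)=0$, as it should.

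For Case 2 ($\alpha\ge\pi$) I would show $0\in F_W$, whence $\delta(0,F_W)=0$ and $d(I,W)=1$. I would argue by contradiction using strict separation: if $0\notin F_W$, then since $F_W$ is compact and convex there is a unit vector $u$ with $\text{Re}(\bar u\, e^{i\theta_i})>0$ for every $i$, which says that all eigenvalues lie in the open semicircle of directions within $\pi/2$ of $\arg u$. But then the smallest arc containing them has length strictly less than $\pi$, contradicting $\alpha\ge\pi$; note this also settles the boundary value $\alpha=\pi$, since an open semicircle cannot contain the two antipodal endpoints of $C$. Hence $0\in F_W$ and the claim follows.

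I expect the only delicate point to be the convex-geometry verification in Case 1: namely, that among all edges of the polygon the long chord $L$ is the one onto which the origin projects, and that the perpendicular foot falls inside the segment rather than past one of its ends. Both facts hinge on the single inequality $\alpha<\pi$ together with the correct identification of the endpoints of $C$, after which the evaluation $\delta(0,F_W)=\cos(\alpha/2)$ is immediate.
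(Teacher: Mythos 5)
Your proof is correct, but it takes a genuinely different route from the paper's in both cases. For Case 1 the paper proves that the chord joining $e^{i\theta_1}=1$ and $e^{i\theta_n}$ lies ``in front of'' the polygon by explicit computation: it writes the intersection equation $r\sum_i p_i e^{i\theta_i} = x e^{i\theta_1}+(1-x)e^{i\theta_n}$, solves for $r$ and $x$ in closed form, and verifies $0\le r,x\le 1$ using the inequality $\sin\theta_n \le \sin(\theta_n-\theta_i)+\sin\theta_i$, so that the segment from the origin to any point of $F_W$ meets the chord. You replace this computation by a supporting-line argument: the line through the chord separates the origin from $F_W$, and its perpendicular foot, the chord's midpoint at distance $\cos(\alpha/2)$, already belongs to $F_W$. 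For Case 2 the paper exhibits an explicit witness for $0\in F_W$ (either a pair of antipodal eigenvalues, or an acute triangle $z_1z_jz_n$ containing the centre), whereas you argue by contradiction via strict separation: if $0\notin F_W$ then all eigenvalues lie in an open semicircle, forcing $\alpha<\pi$. Your version is shorter, avoids the trigonometric manipulation, and handles the boundary values $\alpha=0$ and $\alpha=\pi$ uniformly; the paper's version is more constructive, exhibiting the actual intersection point and the witnessing triangle.

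One caution on your Case 1 as phrased: the inference ``the line through $L$ separates the origin from $F_W$, hence the nearest point of $F_W$ lies on $L$'' is not valid for arbitrary convex sets --- a separated convex set meeting the separating line only in a face can still have its nearest point off that face. It is rescued here exactly by the fact you state next: the foot of the perpendicular from the origin to the line is the midpoint of the chord, which lies in $L\subset F_W$. So the clean logical order is: separation gives $|q|\ge \mathrm{dist}(0,\ell)=\cos(\alpha/2)$ for every $q\in F_W$, and the midpoint attains this bound; the ``nearest point lies on $L$'' claim should be a consequence, not an intermediate step. With that reordering your argument is complete.
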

$d$ is {\em not} a metric on $\mathcal{U}_n$ because it is not necessarily positive for distinct elements. For example, $d(U, cU) = 0, |c|= 1$. However, we will continue to call it metric on $\mathcal{U}_n$ with the understanding that we identify $U$, $cU$. It is a metric on P$\mathcal{U}_n = \mathcal{U}_n/Z(\mathcal{U}_n)$, the quotient group obtained by factoring out the center. Since P$\mathcal{U}_2 \simeq \text{SO}(3)$ the metric induces a metric on the special orthogonal group in 3 dimensions. Some useful properties of the metric $d$ on $\mathcal{U}_n$ are summarized below. 
\begin{thm}\label{thm:properties}
For any pair of unitary matrices $U,V$, the function $d(U,V)$ is a pseudometric satisfying the following. 
\be[{\rm 1.}]
\item {{\rm Projective invariance.}} 
 For any complex number $c$ of modulus 1,
\[ d(U,cV) = d(cU,V) = d(U,V) \]
\item
$d(U, V) = 0$ if and only if $U = cV$ for some complex number $c$, $|c| = 1$. 
\item {{\rm Unitary invariance. }}
$d$ is invariant under left and right translations in the group
  $\mathcal{U}_n$.
\item
$d(U, V) = 1$ iff there is a unit vector $\alpha$ such that $U\alpha$
  and $V\alpha$ are orthogonal. 
\item
For unitary operators $U, V, W, X$
\beq\label{eq:monotone}
d(UV, WX) \leq d(U, W) + d(V, X)
\eeq
\ee
\end{thm}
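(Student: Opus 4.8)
The plan is to reduce every assertion to two facts already in hand: the reduction $d(U,V)=d(I,U^\dagger V)$ together with the spectral formula of Theorem~\ref{thm:eigenChar}, and the pure-state expression \eqref{def:psi_dist}. First I would record that $d=\sup_\rho d_\rho$ is indeed a pseudometric. Each $d_\rho$ is one by Lemma~\ref{lem:pseudo1}; nonnegativity and symmetry pass to the supremum at once, and the triangle inequality survives because, for every $\rho$, $d_\rho(E,F)\le d_\rho(E,G)+d_\rho(G,F)\le d(E,G)+d(G,F)$, so taking the supremum over $\rho$ on the left gives $d(E,F)\le d(E,G)+d(G,F)$.

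For item~1 I would write $d(U,cV)=d(I,cU^\dagger V)$ and observe that multiplying a unitary $W$ by a phase $c$ with $|c|=1$ rotates all of its eigenvalues through a common angle, hence leaves the arc length $\alpha$ of Theorem~\ref{thm:eigenChar} unchanged; thus $d(I,cW)=d(I,W)$, and the case $d(cU,V)$ is identical. For item~2, $d(U,V)=0$ forces $\sin(\alpha/2)=0$ with $\alpha<\pi$, so $\alpha=0$ and all eigenvalues of $W=U^\dagger V$ coincide; since $W$ is unitary and therefore diagonalizable, $W=e^{i\theta}I$, i.e. $U=e^{-i\theta}V$. The converse is immediate, for $U=cV$ gives $U^\dagger V=\bar c\,I$, a single eigenvalue and $\alpha=0$.

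For item~3, left invariance is purely algebraic: $d(WU,WV)=d(I,U^\dagger W^\dagger W V)=d(I,U^\dagger V)=d(U,V)$. Right invariance uses the spectral formula more essentially: $d(UW,VW)=d\bigl(I,W^\dagger(U^\dagger V)W\bigr)$, and conjugation by $W$ preserves the eigenvalues of $U^\dagger V$, hence the arc length $\alpha$, so Theorem~\ref{thm:eigenChar} returns the same value. For item~4 I would argue from \eqref{def:psi_dist}: since the maximum is attained at a pure state, $d(U,V)=1$ iff $\inf_{\norm{\psi}=1}|\inpr{\psi}{U^\dagger V}{\psi}|^2=0$. The unit sphere is compact and the map $\psi\mapsto|\inpr{\psi}{U^\dagger V}{\psi}|^2$ continuous, so the infimum is attained; it vanishes iff there is a unit vector $\alpha$ with $\inpr{\alpha}{U^\dagger V}{\alpha}=\inp{U\alpha}{V\alpha}=0$, i.e. $U\alpha\perp V\alpha$.

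Finally, item~5 combines the pseudometric triangle inequality with two-sided invariance. Inserting $WV$,
\[ d(UV,WX)\le d(UV,WV)+d(WV,WX)=d(U,W)+d(V,X), \]
where the first term is reduced by right invariance and the second by left invariance (item~3). The one place that calls for genuine care is the spectral input behind items~1--3: all three rest on $d(I,\cdot)$ being a function of the eigenvalue arc alone, which is exactly the content of Theorem~\ref{thm:eigenChar}, and right invariance in particular cannot be obtained algebraically without it. The inequality \eqref{eq:monotone} looks like the hardest claim, but it dissolves into a routine computation once invariance is established, so the real work has already been done upstream.
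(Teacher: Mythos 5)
Your proof is correct, and its overall skeleton (reduce to $d(I,U^\dagger V)$, then get item 5 by inserting a middle term and invoking two-sided invariance) matches the paper's. The genuine difference is in items 1--3: the paper treats these without Theorem~\ref{thm:eigenChar} at all. Projective invariance and the vanishing condition are read off directly from the definition $d=\sup_\rho d_\rho$ (phases cancel in $U\rho U^\dagger$, and $d(U,V)=0$ forces $U^\dagger V$ to fix every state, hence to be scalar), and right invariance is obtained by an elementary change of variables in the minimization: writing $d(U,V)=[\,1-\min_{\norm{\psi}=1}|\inpr{\psi}{U^\dagger V}{\psi}|^2]^{1/2}$ and substituting $\ket{\phi}=X\ket{\psi}$, which is a bijection of the unit sphere since $X$ is unitary, gives $\min_{\norm{\psi}=1}|\inpr{\psi}{X^\dagger U^\dagger VX}{\psi}|=\min_{\norm{\phi}=1}|\inpr{\phi}{U^\dagger V}{\phi}|$. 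This means your closing remark --- that right invariance ``cannot be obtained algebraically'' without the spectral formula --- is inaccurate; it can, and that is exactly the paper's route. Your spectral approach is still legitimate (Theorem~\ref{thm:eigenChar} is proved independently, so there is no circularity), and it buys a pleasant uniformity: phase multiplication and conjugation visibly preserve the eigenvalue arc, so items 1--3 all follow from one invariance principle. What the paper's approach buys is independence from the appendix: Theorem~\ref{thm:properties} then rests only on the definition of $d$ and elementary manipulations, which is worth knowing if one wants these properties for variants of $d$ (e.g.\ pseudometrics $d_K$) where an eigenvalue characterization is unavailable. One small credit to you: your explicit verification that a supremum of pseudometrics is a pseudometric, and your compactness argument in item 4, fill in steps the paper passes over in silence.
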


\begin{proof}
The first two assertion follow easily  from the definition of of $d$. The fact that $d$ is left-invariant is obvious from the relation $d(U, V) = d(1, U^\dagger V) $. Next recall that $d(U, V) = [ 1- \min_{\norm{\psi}= 1} |\inpr{\psi}{U^\dagger V} {\psi}|^2]^{1/2}$. So it suffices to prove that $\min_{\norm{\psi}= 1} |\inpr{\psi}{U^\dagger V} {\psi}|$ is invariant under right translation $U\rightarrow UX, V\rightarrow VX$. Put $\ket{\phi} = X\ket{\psi}$. Then  
\[ 
%\begin{split}
 \min_{\norm{\psi } = 1} |\inpr{\psi}{X^\dagger U^\dagger VX} {\psi}| =  \min_{\norm{X\psi} = 1} |\inpr{\psi}{X^\dagger U^\dagger VX} {\psi}| 
 =  \min_{\norm{\phi} = 1} |\inpr{\phi }{U^\dagger V} {\phi}| 
%\end{split}
\]
We use triangle inequality and translation invariance to prove 5. 
\[
d(UW, VX) \leq d(UW, VW) + d( VW, VX) \leq d(U, V) + d( W, X) 
\]
\end{proof}
Theorem \ref{thm:eigenChar} has some simple but useful consequences for tensor product of operators. Making use of the fact that the eigenvalues of the operator $A\otimes B$ are product of eigenvalues of $A$ and $B$ and the theorem we infer the following. 
\begin{propn}
Let $U,V \in \mathcal{U}_m$ and $W, X \in  \mathcal{U}_n$. Let $d(U, V) = \delta_1$ and $ d(W, X) = \delta_2$. Let $\delta = d( U\otimes W, V\otimes X)$. Then 
\[
\delta =  
\begin{cases}
  \delta_1\sqrt{1 - \delta_2^2} + \delta_2 \sqrt{ 1 - \delta_1^2} \text{ \sf{if} } \delta_1^2 + \delta_2^2 < 1 \\
 1 \text{ \sf{otherwise.}}
\end{cases}
\]
\end{propn}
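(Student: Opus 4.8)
The plan is to reduce the tensor-product distance to the single-argument quantity $d(I,\cdot)$ and then read everything off from the spectral characterization in Theorem~\ref{thm:eigenChar}. By unitary invariance (Theorem~\ref{thm:properties}) together with the elementary identity $(U\otimes W)^{\dagger}(V\otimes X) = (\hconj{U}V)\otimes(\hconj{W}X)$, I would first rewrite
\[ \delta = d(U\otimes W,\, V\otimes X) = d\bigl(I,\, P\otimes Q\bigr), \qquad P \defn \hconj{U}V,\quad Q\defn \hconj{W}X, \]
so that $d(I,P)=d(U,V)=\delta_1$ and $d(I,Q)=d(W,X)=\delta_2$. This turns the assertion into a purely spectral statement about the operator $P\otimes Q$.

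Let $\alpha_1$ and $\alpha_2$ be the lengths of the smallest arcs containing the spectra of $P$ and $Q$ respectively, so that $\delta_1=\sin(\alpha_1/2)$ and $\delta_2=\sin(\alpha_2/2)$ in the nondegenerate range. The second step is to locate the smallest arc of $P\otimes Q$. Since the eigenvalues of $P\otimes Q$ are the pairwise products of those of $P$ and $Q$, their phases are the pairwise sums of the phases of $P$ and $Q$; representing the two spectra as phase intervals of lengths $\alpha_1$ and $\alpha_2$, every product phase lies in an interval of length $\alpha_1+\alpha_2$, with both endpoints attained. When $\alpha_1+\alpha_2<\pi$ I would argue that this is exactly the minimal arc (the complementary gap has length $2\pi-(\alpha_1+\alpha_2)>\pi$ and therefore dominates every internal gap), so the first alternative of Theorem~\ref{thm:eigenChar} yields $\delta=\sin\bigl((\alpha_1+\alpha_2)/2\bigr)$. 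When $\alpha_1+\alpha_2\ge\pi$ the product spectrum spans at least a semicircle, so by the second alternative $\delta=1$; in particular, if say $\alpha_1\ge\pi$ then already the rotated copy of the spectrum of $P$ obtained by fixing one eigenvalue of $Q$ spans an arc of length $\alpha_1\ge\pi$, so the degenerate case $\delta_1=1$ is subsumed.

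Finally, I would translate the spectral answer back into $\delta_1,\delta_2$. Because $\cos(\alpha_i/2)=\sqrt{1-\delta_i^2}$ for $\alpha_i/2\in[0,\pi/2)$, the sine addition formula gives
\[ \sin\!\Bigl(\tfrac{\alpha_1+\alpha_2}{2}\Bigr) = \delta_1\sqrt{1-\delta_2^2}+\delta_2\sqrt{1-\delta_1^2}, \]
which is the first branch. The case split matches once I verify that $\delta_1^2+\delta_2^2<1$ is equivalent to $\alpha_1+\alpha_2<\pi$: indeed $\sin^2(\alpha_1/2)<\cos^2(\alpha_2/2)=\sin^2(\pi/2-\alpha_2/2)$ is, by monotonicity of $\sin$ on $[0,\pi/2]$, equivalent to $\alpha_1/2<\pi/2-\alpha_2/2$. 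I expect the only delicate point to be the second step: confirming rigorously that the minimal arc of the product spectrum is exactly $\alpha_1+\alpha_2$ (and not shortened by wraparound) and handling the boundary cases $\alpha_i\ge\pi$, i.e.\ $\delta_i=1$. The remainder is bookkeeping with Theorems~\ref{thm:eigenChar} and~\ref{thm:properties} and elementary trigonometry.
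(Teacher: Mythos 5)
Your proposal is correct and takes essentially the same route as the paper's proof: reduce via unitary invariance to $d(I,(\hconj{U}V)\otimes(\hconj{W}X))$, observe that eigenvalue phases add under tensor products so the spectral arc lengths add to $\alpha_1+\alpha_2$, and then apply Theorem~\ref{thm:eigenChar} together with the sine addition formula. If anything you are more careful than the paper, which does not address the wraparound subtlety at all and treats the degenerate case $\delta_1=1$ by directly exhibiting a product state with vanishing expectation value rather than by your spectral-containment argument; both handlings are valid.
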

\begin{proof}
Let ${U^\dagger V} = Y$ and $ W^\dagger X = Z$.  If say, $\delta_1 = 1$ then there is a state $\ket{\phi}\in H_n$ such that $\inpr{\phi}{Y}{\phi} = 0$. Then $\bra{\psi}\bra{\phi}{Y \otimes Z}\ket{\phi}\ket{\psi} = 0$ for any any $\ket{\psi} \in H_n$ proving that $\delta = 1$. In case $\delta_1, \delta_2 < 1$ as before we may assume that both the unitary operators  $Y$ and $Z$ have eigenvalues in the upper half plane, 0 is the smallest argument when ordered and $\alpha_1$ and $\alpha_2$ the largest arguments. So $\delta_1 = \sin{(\alpha_1/2)}$ and  $\delta_2 = \sin{(\alpha_2/2)}$. The largest argument of the eigenvalues of $Y\otimes Z$ is $\alpha_1 + \alpha_2$. The assertion in the proposition is a direct consequence of Theorem \ref{thm:eigenChar}. 
\end{proof}

\noindent
I conclude this section with a discussion of metrics induced by the class of $p\text{-norms},\; p \geq 1$ also called Schatten norms on operators. Given an operator $A$ on a finite dimensional Hilbert space define the $p$-norm (also called $L_p$ norm) \cite{Bhatia} by 
 \[ \norm{A}_p = [\tr(|A|^p)]^{1/p} \]
 The trace norm is the $L_1$ norm. It is easy to show that for pure states $\pj{\psi}$ and $\pj{\phi}$
 \[ \norm{ \pj{\psi} - \pj{\phi}}_p = 2^{1/p} (1 -\modulus{\inp{\psi}{\phi}}^2)^{1/2} \]
 To prove this observe that we are essentially in a 2-dimensional Hilbert space spanned by $\ket{\psi}$ and $\ket{\phi}$ and compute the trace of $|\pj{\psi} - \pj{\phi}|^p$ explicitly.  Thus the metric $\dmn{d}{p}$\footnote{Note $\dmn{d}{1} = d$ in our notation.} induced by the $p$-norm on unitary operators is a multiple of that induced by the trace norm:
\[ \dmn{d}{p} (U, V) = 2^{1/p} d(U, V) \]
Finally, it is also true that the metric induced by the so-called Ky-Fan norms \cite{Bhatia} is a multiple of $d$. 
 
\subsection{Generalisation of the pseudometrics on $\unitary{n}$}
In this subsection the metrics and pseudometrics defined earlier are generalised. Some of these generalisations are important for applications in quantum information processing and quantum mechanics discussed in the following sections.  Recall that $\mathcal{D}_n$ is the convex set of density matrices in dimension $n$. Let $K$ be a closed and convex subset of $\mathcal{D}_n$ and $U, V\in \unitary{n}$. Define 
\beq \label{def:subset-metric}
d_K (U, V)  = \sup_{\rho \in K} d_\rho(U, V)
\eeq
Clearly, $d_K$ is a pseudometric on $\unitary{n}$. We have $d_K(U, V) = 0$ iff $ U\rho U^\dagger = V  \rho V^\dagger$ for all $\rho \in K$, that is, $K$ is an {\em invariant subset} of $U^\dagger V$. Further $d_K(U, V) \leq d(U, V)$. It is still true that $d_\rho(U, V)$ attains its maximum at some extreme point as $K$ is closed and convex. But the extreme points of $K$ may include {\em mixed} states and the maximum value in $K$ may occur at such a mixed state. But if $K$ is {\em face} of $\mathcal{D}_n$,  
\footnote{Recall that a face of convex set $S$ is subset $F$ such that if $x\in F$ and $x = p x_1 + (1 - p) x_2, \; 0 <p <1 $ for $x_1, x_2 \in S$ then $x_1, x_2\in F$. That is, if a point in $F$ belongs to the interior point of a line segment then the whole line segment belongs to $F$.} 
%%%%%%%%%%%%%%%%%%%%%%%%%%%%%%%%%%%%%
then extreme points in $K$ will be extreme points in $\mathcal{D}_n$ and hence pure states. Consequently, $d_K(U, V)$ is attained at pure states in $K$. An important instance of a face in $\mathcal{D}_n$  is the set of density matrices with support in a subspace $\mathcal{L}$ of the underlying Hilbert space $\mathcal{H}$. A second important case is when the Hilbert space $H = H_m\otimes H_n$ and $K= K_s$ is the set of separable states in $\mathcal{D}_{mn}$
\[
\begin{split}
 K_s &= \{\rho \in \mathcal{D}_{mn} |\; \rho  = p_1\rho_1^1\rho_2^1 + \rho_1^2\rho_2^2 +\dotsc + p_k \rho_1^k\rho_2^k  , \;\rho_1^i \in \mathcal{D}_{m}, \rho_2^i \in \mathcal{D}_{n}\}\\
p_i & \geq 0 \text{ and }\sum p_i = 1\\
\end{split}
\]
$K_s$ is a closed convex subset of $\mathcal{D}_{mn}$. Moreover, any $\rho\in K_s$ can be written as a convex combination of pure {\em product} states. So the pure product states are the extreme points of $K_s$, the convex hull of the set of pure product states. The following result is useful. 
\begin{thm}\label{thm:metric_product}
Let $\mathcal{H}= \mathcal{H}_m\otimes \mathcal{H}_n$. $d_{K_s}$ is a metric on $\mathcal{U}_{mn}$ modulo its center. \footnote{Recall that it means $d$ is metric on $PU_{mn}$, the projective unitary group of dimension $mn$}
\end{thm}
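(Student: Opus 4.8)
The plan is to reduce the theorem to the single axiom that distinguishes a metric from a pseudometric. By the general construction (Lemma~\ref{lem:pseudo1} applied with $K=K_s$) $d_{K_s}$ is already a pseudometric, and it inherits projective invariance from each $d_\rho$, so it descends to a well-defined pseudometric on $PU_{mn}$. The only thing left to verify is strict positivity on $PU_{mn}$: I must show that $d_{K_s}(U,V)=0$ forces $V=cU$ with $|c|=1$.

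By the remark following \eqref{def:subset-metric}, the condition $d_{K_s}(U,V)=0$ is equivalent to $W\!\cdot\!\rho=\rho$ for every $\rho\in K_s$, where $W = U^\dagger V$. Since a supremum of nonnegative quantities vanishes only if each term does, this in particular forces $W\pj{\psi}W^\dagger=\pj{\psi}$ at every extreme point of $K_s$, i.e.\ every pure product state $\ket{\psi}=\ket{\phi}\otimes\ket{\chi}$. Because $W$ is unitary, $W\pj{\psi}W^\dagger=\pjx{W\psi}{W\psi}$, so equality with $\pj{\psi}$ says $W\ket{\psi}$ is a unit vector proportional to $\ket{\psi}$; hence \emph{every} product vector $\ket{\phi}\otimes\ket{\chi}$ is an eigenvector of $W$.

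The crux is then a purely linear-algebraic claim: a unitary $W$ on $\mathcal{H}_m\otimes\mathcal{H}_n$ for which every product vector is an eigenvector must be a scalar $\lambda I$. I would prove this by fixing orthonormal bases $\{\ket{e_i}\}$ of $\mathcal{H}_m$ and $\{\ket{f_j}\}$ of $\mathcal{H}_n$ and letting $\lambda_{ij}$ denote the eigenvalue of $W$ on $\ket{e_i}\otimes\ket{f_j}$. The lever is the elementary observation that if two linearly independent eigenvectors of $W$ have a sum that is again an eigenvector, their eigenvalues must coincide (writing $W(v+w)=\nu(v+w)=\lambda v+\mu w$ and using independence gives $\lambda=\nu=\mu$). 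Applying this to the product vector $(\ket{e_i}+\ket{e_{i'}})\otimes\ket{f_j}$ yields $\lambda_{ij}=\lambda_{i'j}$, and applying it to $\ket{e_i}\otimes(\ket{f_j}+\ket{f_{j'}})$ yields $\lambda_{ij}=\lambda_{ij'}$; together these force all $\lambda_{ij}$ equal to a common value $\lambda$. Since the $\ket{e_i}\otimes\ket{f_j}$ form an orthonormal basis of $\mathcal{H}_m\otimes\mathcal{H}_n$, I conclude $W=\lambda I$, and unitarity gives $|\lambda|=1$. Hence $V=\lambda U$, i.e.\ $U$ and $V$ agree in $PU_{mn}$, which is exactly strict positivity.

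I expect the main obstacle to be precisely this linear-algebra lemma. Although the eigenvalue-matching trick is short, the entire argument hinges on the fact that the combinations $(\ket{e_i}+\ket{e_{i'}})\otimes\ket{f_j}$ and $\ket{e_i}\otimes(\ket{f_j}+\ket{f_{j'}})$ remain \emph{product} vectors and are linearly independent from their summands. This is where the tensor structure of $K_s$ is essential: it is exactly what makes the separable-state pseudometric strictly positive, whereas the pseudometric attached to a generic face of $\mathcal{D}_{mn}$ (such as the states supported on a proper subspace $\mathcal{L}$) need not be, since there $W$ may fix the whole face while acting nontrivially on its orthogonal complement.
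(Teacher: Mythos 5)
Your proposal is correct and follows essentially the same route as the paper: both reduce to showing that $W=U^\dagger V$ having every product vector as an eigenvector forces $W=\lambda I$, via the observation that a sum of linearly independent eigenvectors can only be an eigenvector when the eigenvalues agree, applied to $(\ket{e_i}+\ket{e_{i'}})\otimes\ket{f_j}$ and $\ket{e_i}\otimes(\ket{f_j}+\ket{f_{j'}})$. The only cosmetic difference is that you chain eigenvalues over a fixed product basis by transitivity, whereas the paper partitions arbitrary product states into eigenvalue classes $R_a$ and adds a (logically redundant) check on $(\ket{\alpha}+\ket{\alpha'})\otimes(\ket{\beta}+\ket{\beta'})$.
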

\begin{proof}
Recall that we identify unitary operators $U$ and $cU, \; |c| = 1$. Let $U\in \mathcal{U}_{mn}$ act on $\mathcal{H}$. Suppose $d_{K_s}(1, U) = 0$. Then for any product state $\ket{\alpha}\otimes\ket{\beta}$ we must have $U\ket{\alpha}\otimes\ket{\beta} = a\ket{\alpha}\otimes\ket{\beta}$ where the complex number $a$ of modulus 1 may depend upon the state. So the subset $R$ of product states are partitioned by action of $U$. Thus $R_a \equiv \{\psi \in R | \; U\psi = a\psi\}$. Suppose there are two such components $R_a, R_b$. First, if $\ket{\alpha}\otimes\ket{\beta} \in R_a$ then for any $\ket{\alpha'}$, $\ket{\alpha'}\otimes \ket{\beta} \in R_a$. Suppose $\ket{\alpha'}\otimes \ket{\beta} \in R_b$. Then $U(\ket{\alpha'} + \ket{\alpha'})\otimes \ket{\beta} = (a\ket{\alpha} + b\ket{\alpha'})\otimes\ket{\beta}$. This cannot be an eigenstate of $U$ unless $a=b$. Similarly we can show that for any pair of vectors $\ket{\beta}, \ket{\beta'}$, $\ket{\alpha}\otimes\ket{\beta}$ and $\ket{\alpha}\otimes\ket{\beta'}$ belong to the same component. Computing the effect of  $U$ on $(\ket{\alpha} + \ket{\alpha'})\otimes(\ket{\beta} + \ket{\beta'})$ we note that $a=b$. That is,  $U$ acts as a constant operator $aI$ on $R$ and hence on $\mathcal{H}$. 
\end{proof}
\section{Applications of metrics and pseudometrics} 
In this section I  discuss application of metrics and pseudometrics on unitary groups in quantum mechanics, in particular, quantum information processing. 
The first application is a characterisation of distinguishable unitary operators while second  deals with use of pseudometrics in quantum coding theory. Finally,  an application of metrics to approximating unitary operators ({\em ergo} quantum circuits) is presented. Some notion of distance is needed to test the ``goodness'' of approximation. The metric $d$ is used to demonstrate quantum search as an approximation problem. 

\subsection{Distinguishing unitary operators}
Consider the following problem: {\em given that a unitary operator is picked at random from the set $\{U, V\} \subset \mathcal{U}_n$ find necessary and sufficient conditions that the chosen operator can be identified with certainty in one quantum operation}.  

We will take for granted the fact that two quantum states can be distinguished if and only if they are orthogonal (see \cite{Fuchs} for a thorough treatment). As a consequence we prove the following. 
\begin{thm}\label{thm:dist}
Two unitary operators $U,V$ are distinguishable if and only if 
\[d(U, V) = 1 \text{ or equivalently } d(I, U^\dagger V) = 1\]
\end{thm}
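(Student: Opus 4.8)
The plan is to establish the two directions of the equivalence by connecting the operational notion of distinguishability to the geometric characterization of $d(U,V)=1$ already provided in Theorem~\ref{thm:properties}, item~4. Recall that the setup for distinguishing $U$ from $V$ is to prepare some probe state, apply the unknown operator, and then perform a measurement; the given fact we are allowed to invoke is that two quantum states are perfectly distinguishable if and only if they are orthogonal. The key reformulation is that, by left-invariance, $d(U,V)=d(I,\hconj{U}V)$, so without loss of generality I would reduce the whole problem to distinguishing $I$ from $W \defn \hconj{U}V$, i.e.\ deciding whether the channel applied was the identity or $W$.

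First I would treat the \emph{if} direction. Assume $d(U,V)=1$. By Theorem~\ref{thm:properties}(4) there exists a unit vector $\alpha$ such that $U\alpha$ and $V\alpha$ are orthogonal. I would then use $\alpha$ as the input probe state: running the unknown operator on $\alpha$ produces either $U\alpha$ or $V\alpha$, and since these are orthogonal they can be distinguished with certainty by the cited result on state discrimination (concretely, a projective measurement $\{\pj{U\alpha}, I - \pj{U\alpha}\}$ identifies the outcome without error). Hence the operator is distinguishable.

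Next I would treat the \emph{only if} direction by proving the contrapositive: suppose $d(U,V)<1$ and show no strategy distinguishes them. Because $d(U,V)<1$, Theorem~\ref{thm:properties}(4) tells us that for \emph{every} unit vector $\alpha$ the states $U\alpha$ and $V\alpha$ are \emph{not} orthogonal, so $|\inp{U\alpha}{V\alpha}| = |\inpr{\alpha}{\hconj{U}V}{\alpha}| > 0$. Thus for any single-query input state the two candidate output states overlap and cannot be perfectly distinguished. The step that needs the most care is arguing that allowing a general input---in particular an entangled probe that uses an ancilla, applying $U\otimes I$ versus $V\otimes I$---does not help. Here I would appeal to the supremum-over-all-states definition of $d$ together with the fact that $d(U,V)<1$ forces $\min_{\norm{\psi}=1}|\inpr{\psi}{\hconj{U}V}{\psi}| > 0$; extended to the ancilla-augmented operator $\hconj{U}V \otimes I$, whose numerical range is governed by the same arc characterization in Theorem~\ref{thm:eigenChar} (tensoring with $I$ adds the eigenvalue $1$, which already lies in the arc and so does not enlarge it), the minimal overlap stays strictly positive, so the outputs remain non-orthogonal and indistinguishable.

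The main obstacle I anticipate is precisely this last point: ruling out that entanglement with an ancilla or a more exotic measurement scheme could achieve perfect discrimination even when every unentangled probe fails. The clean way through is to observe that distinguishability with certainty in one query is equivalent to the existence of \emph{some} (possibly entangled) input whose two images are orthogonal, and then to show that the existence of such an input is captured exactly by $d(U\otimes I, V\otimes I)=1$, which by the arc argument above equals $d(U,V)$. Once that identity is in hand the two directions close up symmetrically and the theorem follows.
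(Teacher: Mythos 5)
Your forward direction is exactly the paper's argument: take the unit vector $\alpha$ with $U\alpha\perp V\alpha$ supplied by Theorem~\ref{thm:properties}(4), use it as the probe, and project onto the image of $\alpha$ (the paper applies $\hconj{U}$ first and measures $\{\pj{\alpha}, I-\pj{\alpha}\}$, which is the same test). Your converse, however, takes a genuinely different and in one respect better route: you allow an arbitrary, possibly entangled probe and reduce the whole question to the identity $d(U\otimes I, V\otimes I)=d(U,V)$, whereas the paper's necessity argument only analyses probes of product form $\alpha\otimes\beta$ with product measurement vectors $\alpha'\otimes\beta'$ and a unitary $Z$ on the ancilla, so the entangled case---which you correctly single out as the crux---is not squarely addressed there. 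Note also that the identity you need is already available in the paper: Proposition 1 with the second factors both equal to $I$ gives $\delta_2=0$ and hence $d(U\otimes I, V\otimes I)=d(U,V)$.

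The problem is that the justification you give for that identity is wrong, and if it were taken literally it would destroy the very step it is meant to support. Tensoring with the ancilla identity does not ``add the eigenvalue $1$'': the eigenvalues of $\hconj{U}V\otimes I_k$ are the products of eigenvalues of $\hconj{U}V$ with eigenvalues of $I_k$, all of which equal $1$, so the spectrum as a set is exactly that of $\hconj{U}V$---nothing is adjoined. Moreover, your sub-claim that $1$ ``already lies in the arc'' is false in general: if $W=\hconj{U}V$ has eigenvalues $e^{0.9\pi i}$ and $e^{1.1\pi i}$, then $d(I,W)=\sin(0.1\pi)<1$ and the smallest arc containing the spectrum has length $0.2\pi$ and does not contain $1$; adjoining the point $1$ would force the smallest containing arc to have length $1.1\pi>\pi$, so Theorem~\ref{thm:eigenChar} would give distance $1$---that is, by your reasoning entanglement would help precisely where you must show it does not. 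The repair is one line: since the spectrum of $\hconj{U}V\otimes I$ equals that of $\hconj{U}V$, the smallest arc is unchanged and Theorem~\ref{thm:eigenChar} yields $d(I\otimes I,\hconj{U}V\otimes I)=d(I,\hconj{U}V)<1$; hence no probe, entangled or not, produces orthogonal outputs, and the cited fact on state discrimination completes the contrapositive. With that correction your proof is sound, and arguably more complete than the paper's on the necessity side.
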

\begin{proof}
First suppose $d(U, V) = 1$. Then there exists a (pure) state $\ket{\alpha}$ such that $\inpr{\alpha}{U^\dagger V}{\alpha} = 0$. We apply the unitary operator $X$, randomly picked from the set $\{U, V\}$,  to $\ket{\alpha}$ followed by $U^\dagger$. The resulting state is $U^\dagger X\ket{\alpha}$. This is followed by a a projective measurement $\{P_1, P_2 = I - P\}$ where $P_1 = \pj{\alpha}$. If $X = U$ we get outcome 1 (corresponding to $P_1$) and outcome 2 if $X= V$ (for $P_2$) with probability 1. 

In the other direction, suppose two unitary operators $U$ and $V$ with $d(U, V) < 1$ can be distinguished. Let $W = U^\dagger V$. Then 
\[ |\inpr{\psi}{W}{\psi}| = \min_{\norm{\alpha} = 1} |\inpr{\alpha}{W}{\alpha}| > 0 \]
This implies that the eigenvalues of $W$ lie in a semicircle and as before we may assume it to be the upper semicircle. It is clear that $U$ and $V$ can be distinguished with certainty if and only if $I$ and $W$ can be distinguished. Using ancillary states and operators (if necessary) one will be able to distinguish with certainty if only if there exist states $\alpha, \beta$ and $\alpha', \beta'$ and a unitary operator $Z$ such that 
\[ 
\begin{split}
\inpr{\alpha'\otimes \beta'}{W\otimes Z}{\alpha\otimes \beta} &= \inpr{\alpha'}{W}{\alpha}\inpr{\beta'}{Z}{\beta} = 0 \\
|\inpr{\alpha'\otimes \beta'}{I\otimes Z}{\alpha\otimes \beta}| &= |\inp{\alpha'}{\alpha}||\inpr{\beta'}{Z}{\beta}| = 1\\
\end{split}
\]
The second line implies  $ |\inp{\alpha'}{\alpha}| = 1$ and hence from the first line it follows that $\inpr{\alpha}{W}{\alpha} = 0$. This is a contradiction. A similar reasoning shows that if we interchange 0 and 1 in the two lines we again contradict the assumption that $\inpr{\alpha}{W}{\alpha} >  0$ for all states $\alpha$. 
\end{proof}
The first half of the proof (sufficiency) goes through even when we consider the pseudometric $d_s$ over the set of separable states $K_s$. 

%%%%%%%%%%%%%%%%%%%%%%%%%%%%%%%%%%%%%%%%%%%%%%%%%%%%%%%%%%%%%%%%%%%%%%%%%%%

\subsection{Stabilizer groups and classical simulation}
Let $X, Y, Z$ denote the Pauli matrices. I use the notation standard in quantum information theory \cite{Nielsen}. Let 
\beq\label{eq:Pauli}
\mathcal{G}_1 = \{\pm I_2, \pm iI_2, \pm X,  \pm iX,  \pm Y,  \pm iY,  \pm Z,  \pm iZ \}
\eeq
where $I_2$ is the 2-dimensional unit matrix. $\mathcal{G}_1$ is a group. Let 
\[ \mathcal{G}_n = \underbrace{\mathcal{G}_1 \otimes  \mathcal{G}_1\otimes \dotsb \otimes \mathcal{G}_1}_{n \text{ factors}} \]
All elements $g\in \mathcal{G}_n$ satisfy $g^2 = \pm I$ and the eigenvalues of $g$ are either $\{1, -1\}$ or $\{i, -i\}$. Therefore, from Theorem \ref{thm:eigenChar} it follows tha $d(I, g) = 1\text{ or } 0$ and the value 0 occurs only if $g$ is in the center $\{\pm I, \pm iI \}$ of the group. Equivalently, for any $g_1, g_2\in \mathcal{G}_n$, $d(g_1, g_2) = 1 \text{ or }0$. Let $K \subset \mathcal{G}_n$ be a subgroup. A subset $S$ of the state space $\mathcal{H}_N \; (N = 2^n)$ is called a stabilizer subspace if it is the set of all vectors fixed by all $g\in K$, that is, $g\cdot \alpha = \alpha, \; \forall g\in K \text{ and } \alpha \in S$. It is clear that $S$ is a subspace. We generalize the definition of stabilizer subspace by requiring only that each $g\in K$ act as a constant operator $c(g)I$ on $S$. Any subgroup will have 0 in its stabilizer subspace. We say that a subgroup $K$ has non-trivial stabilizer if $S$ contains a non-zero vector. Let us modify the definition of the pseudometric $d_\alpha(U, V)$ to accommodate arbitrary vectors (recall that this was defined for unit vectors only in \eqref{def:psi_dist}). Thus
\[ d_{\alpha} (U, V) = \left(1 - \frac{|\inpr{\alpha}{U^\dagger V }{\alpha}|^2}{\norm{\alpha}^2}\right)^{1/2} \]
Let us first define the {\em null space} of $d$ for an {\em abelian} subgroup $H \subset \mathcal{U}_N$:
\beq\label{eq:nulls}
\mathcal{N}_H = \{\rho \in \mathcal{D}_N : \: d_\rho(I_n, U) = 0 \text{ for all } U\in H\} 
\eeq
$\mathcal{N}_H$ is convex by subadditivity of norm (trace norm in this case). Since $H$ is a subgroup $U^\dagger V \in H$ if $U, V\in H$. So $d_\rho(U, V) = d_\rho (1, U^\dagger V) = 0$ for all $\rho \in \mathcal{N}_H$. Another characterization  is 
\[ \mathcal{N}_H = \{\rho \in \mathcal{D}_N :\: U\rho = \rho U \text{ for all } U\in H\}, \]
the state operators commuting with all $U\in H$. Write any state 
\[ \rho = \sum_{i,j} a_{ij} \pjx{\psi_i}{\psi_j} \]
where the $\psi_i$ are simultaneous eigenstates of all $U\in H$. Then $\rho$ belongs to $\mathcal{N}_H$ if and only if all off-diagonal $a_{ij}= 0$ and $\rho = \sum_i r_i \pj{\psi_i}, \; r_i = a_{ii} \geq 0$ and $\sum r_i = 1$. We summerise these observations. 
\begin{lem}
Let $H$ be an abelian subgroup of $\mathcal{U}_n$. The null space $\mathcal{N}_H$ of $H$ is a convex subset. It is the convex hull of the pure states in $\mathcal{N}_H$ which are also the subset of extreme points. Further, a state $\pj{\psi}$ belongs to $\mathcal{N}_H$ if and only if $\ket{\psi}$ is a simultaneous eigenvector of all $U \in \mathcal{N}_H$. 
\end{lem}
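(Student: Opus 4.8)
The plan is to reduce everything to the commutant description $\mathcal{N}_H = \{\rho \in \mathcal{D}_N : U\rho = \rho U \text{ for all } U \in H\}$ already derived above. Since this equality is defined by the real-linear conditions $U\rho - \rho U = 0$, it exhibits $\mathcal{N}_H$ as the intersection of $\mathcal{D}_N$ with a linear subspace, so convexity is immediate (and, as noted, also follows from subadditivity of the trace norm). It then remains to establish the pure-state characterization and the claim about extreme points and the convex hull.

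First I would settle the pure-state characterization (the last sentence, in which $U$ should range over $H$). For a pure state the commutation condition $U\rho \hconj{U} = \rho$ reads $|U\psi\rangle\langle U\psi| = \pj{\psi}$, which holds exactly when $U\ket{\psi} = c(U)\ket{\psi}$ for a unimodular scalar $c(U)$; requiring this for every $U \in H$ says precisely that $\ket{\psi}$ is a simultaneous eigenvector of all $U \in H$. This proves the third assertion and simultaneously identifies which pure states lie in $\mathcal{N}_H$.

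Then I would treat the extreme points and the convex-hull statement together by simultaneous diagonalization. Because $H$ is abelian and consists of normal (unitary) operators, $\mathcal{H}_N$ splits into joint eigenspaces on each of which every $U \in H$ acts as a scalar, and a vector is a simultaneous eigenvector exactly when it lies in one such eigenspace. A state $\rho \in \mathcal{N}_H$ commutes with every $U \in H$, hence is block-diagonal with respect to this decomposition, and within each block I am free to choose the orthonormal eigenbasis of $\rho$ itself, since any such basis still consists of simultaneous eigenvectors. This yields a spectral decomposition $\rho = \sum_i r_i \pj{\psi_i}$ with $r_i \geq 0$, $\sum_i r_i = 1$, and each $\ket{\psi_i}$ a simultaneous eigenvector, exhibiting $\rho$ as a convex combination of pure states of $\mathcal{N}_H$; thus $\mathcal{N}_H$ is the convex hull of its pure states. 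Finally, every pure state of $\mathcal{N}_H$ is extreme because it is already extreme in the larger set $\mathcal{D}_N$, whereas a non-pure $\rho \in \mathcal{N}_H$ is the nontrivial convex combination of the distinct pure states $\pj{\psi_i} \in \mathcal{N}_H$ just produced and so is not extreme; hence the pure states are exactly the extreme points.

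The only delicate point is the possibility of degenerate joint eigenspaces, where the naive claim that all off-diagonal coefficients vanish in a single fixed eigenbasis fails. The remedy, which I would make explicit, is that the common eigenbasis may be chosen to depend on $\rho$: within each degenerate joint eigenspace $H$ acts only by scalars, so refining to the eigenbasis of the corresponding block of $\rho$ both diagonalizes $\rho$ and keeps every basis vector a simultaneous eigenvector of $H$. With this observation the block-diagonal structure and the extreme-point analysis go through without change, and it is the main obstacle one must not gloss over.
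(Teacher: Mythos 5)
Your proof is correct and follows essentially the same route as the paper: the commutant description $\mathcal{N}_H = \{\rho : U\rho = \rho U \ \forall U \in H\}$, the pure-state characterization via $U\pj{\psi}\hconj{U} = \pj{\psi}$, and a spectral decomposition of $\rho$ in a simultaneous eigenbasis of $H$ (the paper gets convexity from subadditivity of the trace norm, you get it from linearity of the commutation constraints; both work). Where you genuinely improve on the paper is precisely the ``delicate point'' you flag: the paper fixes one simultaneous eigenbasis $\{\psi_i\}$ and asserts that $\rho \in \mathcal{N}_H$ if and only if all off-diagonal coefficients $a_{ij}$ vanish, which is false as stated when a joint character of $H$ has multiplicity greater than one --- for instance, for $H$ the trivial subgroup (or any subgroup of the center) every state lies in $\mathcal{N}_H$, yet most states are not diagonal in the pre-chosen basis. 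Your remedy --- observe that $\rho$ commutes with the joint spectral projections, hence is block-diagonal over the joint eigenspaces, and then rediagonalize $\rho$ within each block so that the eigenbasis depends on $\rho$ but still consists of simultaneous eigenvectors --- is exactly what is needed to make the convex-hull and extreme-point claims rigorous, and your extreme-point argument (pure states are extreme because they are extreme in $\mathcal{D}_N$; non-pure states are visibly non-extreme) is clean. You also correctly note that in the statement of the lemma ``$U \in \mathcal{N}_H$'' should read ``$U \in H$''.
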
 
Let us now get back to $\mathcal{G}_n$. Let $K$ be a subgroup of $\mathcal{G}_n$. 
\begin{lem}
A subgroup $K$ will have non-trivial stabilizer subspace only if the set
\[ N_K =\left \{ \alpha \in H_N | d_{\alpha} (g, g') = 0 \; \forall g, g' \in K\right \}\] 
contains a non-zero vector. In that case, $K$ is abelian and $N_K$ is the null space of $K$. 
\end{lem}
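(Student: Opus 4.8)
The goal is to prove that a subgroup $K \subset \mathcal{G}_n$ has non-trivial stabilizer subspace only if $N_K$ contains a non-zero vector, and that in this case $K$ is abelian and $N_K$ is its null space. I would first unwind the hypothesis: having a non-trivial stabilizer subspace means (by the generalised definition adopted just above) that there is a non-zero vector $\ket{\psi}$ on which every $g \in K$ acts as a constant $c(g)\ket{\psi}$, so $\ket{\psi}$ is a simultaneous eigenvector of all $g \in K$. The plan is to show this single condition is exactly what forces $\ket{\psi} \in N_K$, forces $K$ to be abelian, and identifies $N_K$ with the null space.

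First I would establish the direction stated in the lemma. If $\ket{\psi}$ is a common eigenvector, $g\ket{\psi} = c(g)\ket{\psi}$ with $|c(g)| = 1$ since $g$ is unitary, then for any $g, g' \in K$ I compute $\inpr{\psi}{g^\dagger g'}{\psi} = \conj{c(g)} c(g') \inp{\psi}{\psi}$, whose modulus equals $\norm{\psi}^2$. Plugging into the arbitrary-vector formula for $d_\alpha$ gives $d_\psi(g, g') = (1 - 1)^{1/2} = 0$, so $\ket{\psi} \in N_K$ and in particular $N_K$ contains this non-zero vector.

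The main content is the claim that $K$ is abelian whenever such a $\ket{\psi}$ exists. Here I would invoke the special structure of $\mathcal{G}_n$ noted earlier in the excerpt: any two elements $g, g'$ of the Pauli group either commute or anticommute, and the value $d(I, h) \in \{0, 1\}$ with $0$ occurring only for $h$ in the center. If $g$ and $g'$ anticommute, then $g g' = -g' g$, and acting on a common eigenvector $\ket{\psi}$ gives $c(g)c(g')\ket{\psi} = -c(g')c(g)\ket{\psi}$, forcing $c(g)c(g') = -c(g)c(g')$ and hence $c(g)c(g') = 0$, contradicting $|c(g)| = |c(g')| = 1$. Thus no two elements of $K$ that act nontrivially can anticommute, so $K$ must be abelian. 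I expect this anticommutation argument to be the crux; the rest is bookkeeping.

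Finally, to conclude that $N_K$ is the null space of $K$, I would note that once $K$ is abelian its elements are simultaneously diagonalisable, and by the preceding Lemma on abelian subgroups the null space $\mathcal{N}_K$ (equivalently the set of states $\rho$ with $d_\rho(g,g') = 0$ for all $g, g' \in K$) is precisely the convex hull of pure states that are simultaneous eigenvectors of $K$. Since $N_K$ as defined consists exactly of those vectors $\alpha$ with $d_\alpha(g, g') = 0$ for all $g, g' \in K$, and by the eigenvector characterisation these are the simultaneous eigenvectors, $N_K$ coincides with the vectors spanning the stabilizer subspace and hence is the null space of $K$. The identification thus follows by combining the eigenvector characterisation of $\mathcal{N}_H$ with the abelian conclusion just obtained.
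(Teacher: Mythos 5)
Your proposal is correct and follows essentially the same route as the paper: a nonzero vector in the stabilizer subspace is a simultaneous eigenvector, hence lies in $N_K$, and an anticommuting pair $g, g'$ acting on that common eigenvector forces $c(g)c(g') = -c(g)c(g') = 0$, contradicting unitarity, so $K$ is abelian. Your treatment is in fact slightly more complete than the paper's, since you also spell out the identification of $N_K$ with the null space via the earlier lemma on abelian subgroups, a step the paper leaves implicit.
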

\begin{proof}
If $\alpha \in S$ is non-zero then it is an eigenvector of $g^\dagger g' = \pm gg'$ (every element of the Pauli group is either hermitian or antihermitian). Hence $|\inpr{\alpha}{g^\dagger g'}{\alpha}|^2 = \norm{\alpha}^2$ and $d_\alpha(g, g') = 0$. Note that $g, g'$ either commute or anticommute. Suppose they anticummute. Since $d_\alpha(1, g) = d_\alpha(1, g')= d_\alpha(1, gg')= d_\alpha(1, g'g)=0$. So $\alpha$ is a common eigenvector of $g,g', gg'$ and $g'g=-gg'$. This is impossible as all eigenvalues $c$ must satisfy $|c| = 1$. 
\end{proof}

\noindent
We will assume now that $K$, the stabilzer subgroup is abelian. The corresponding null space $N_K$ is the convex hull of projections corresponding to simultaneous eigenvectors of the operators in $K$. $N_K$ is too large to be of interest. We define a stabilizer {\em subset} $\mathcal{S}$ as a closed subset  of $N_K$ that is {\em face} of $\mathcal{D}$. $\mathcal{S}_K$  being a face is convex. Now suppose $g \in K$ has two eigenvectors $\ket{\psi_1}, \ket{\psi_2}$  such that the corresponding projectors $\pj{\psi_1}, \pj{\psi_2}$ lie in $\mathcal{S}$. Then 
\[ \frac{\pj{\psi_1}}{2}+\frac{ \pj{\psi_2}}{2} = \frac{\pj{\psi_1+\psi_2}}{4} + \frac{\pj{\psi_1-\psi_2}}{4} \]
belongs to $\mathcal{S}$ by convexity. But as $\mathcal{S}$ is also a face $(\pj{\psi_1+\psi_2})/2$ and $(\pj{\psi_1 - \psi_2})/2$ also belong to it. This is possible if and only if  $\ket{\psi_1}, \ket{\psi_2}$ belong to the same eigenvalue. We now have a geometric characterization of {\em generalized} stabilizer subspaces. 
\begin{thm}
Let $K$ be an abelian subgroup of $\mathcal{U}_N$. Define a stabilizer subset $\mathcal{S}$ of $K$ as maximal face of $\mathcal{D}_N$ such that $d_{\mathcal{S}}(I, g) = 0$ for all $g\in K$. Let 
\[ S = \{\ket{\psi} \in H_N : \: \pj{\psi} \in \mathcal{S} \} \]
Then $S$ is a vector subspace such that all $g\in K$ act as constants on $S$. Thus there is a function $g\rightarrow c(g), \: |c(g)| = 1$ on $K$ such that $g\ket{\psi} = c(g)\ket{\psi}$ for all $g\in K$ and $\ket{\psi}\in S$. 
\end{thm}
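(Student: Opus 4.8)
The plan is to exploit the simultaneous diagonalizability of the abelian group $K$ together with the known classification of faces of $\mathcal{D}_N$. Since every $g \in K$ is unitary and the elements of $K$ commute, they share a common orthonormal eigenbasis, so the Hilbert space decomposes as $\mathcal{H}_N = \bigoplus_\chi V_\chi$, where each $V_\chi$ is a maximal subspace on which every $g$ acts as a single scalar $\chi(g)$ with $|\chi(g)| = 1$; the labels $\chi$ are exactly the characters of $K$ (multiplicativity of $\chi$ follows from $g g' = g' g$). I would first record, as already noted before the theorem, that $\rho \in \mathcal{N}_K$ iff $\rho$ commutes with every $g \in K$, which in this eigenbasis means precisely that $\rho$ is block diagonal with respect to the decomposition, $\rho = \bigoplus_\chi \rho_\chi$. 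Equivalently, a pure state $\pj{\psi}$ lies in $\mathcal{N}_K$ iff $\ket{\psi}$ lies wholly inside one of the blocks $V_\chi$.

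Next I would invoke the structural fact that every face of $\mathcal{D}_N$ has the form $\{\rho : \operatorname{supp}\rho \subseteq L\}$ for a unique subspace $L \subseteq \mathcal{H}_N$, so that the set of $\ket{\psi}$ with $\pj{\psi} \in \mathcal{S}$ is, up to phase, exactly the unit sphere of $L$; this is what identifies $S$ with a genuine subspace $L$. The crucial step is then to show $L$ lies inside a single block $V_\chi$. Suppose not: then $L$ contains some $\ket{\psi}$ with nonzero components $\ket{\psi_\chi}, \ket{\psi_{\chi'}}$ in two distinct blocks $\chi \neq \chi'$. But then $\pj{\psi}$ carries the off-diagonal term $\pjx{\psi_\chi}{\psi_{\chi'}} \neq 0$ and fails to be block diagonal, contradicting $\pj{\psi} \in \mathcal{S} \subseteq \mathcal{N}_K$. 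Hence $L \subseteq V_\chi$ for a single $\chi$, and by maximality of the face $\mathcal{S}$ we get $L = V_\chi$. The same conclusion is reachable purely from the convexity/face identity already displayed before the theorem: for $\pj{\psi_1}, \pj{\psi_2} \in \mathcal{S}$ the face property forces the extreme points $\pj{\psi_1 \pm \psi_2}$ into $\mathcal{S} \subseteq \mathcal{N}_K$, so $\ket{\psi_1 \pm \psi_2}$ is again a common eigenvector, which simultaneously closes $S$ under linear combination and forces $\ket{\psi_1}, \ket{\psi_2}$ into the same eigenvalue of each $g$.

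With $S = L = V_\chi$ the conclusion is immediate: $S$ is a vector subspace, and setting $c(g) = \chi(g)$ gives $g\ket{\psi} = c(g)\ket{\psi}$ for all $\ket{\psi} \in S$ and $g \in K$, with $|c(g)| = 1$ since $g$ is unitary. (The map $g \mapsto c(g)$ is in fact a character of $K$, though the statement requires only $|c(g)| = 1$.)

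I expect the main obstacle to be the crucial step of forcing $\mathcal{S}$ into a single block $V_\chi$: this is exactly where abelianness (to secure a common eigenbasis at all) and maximality of the face (both to rule out a subspace straddling two blocks and to obtain the \emph{full} block rather than a proper subspace) are essential. A secondary, cosmetic point is the interpretation of $S$: as literally written, $\pj{\psi} \in \mathcal{D}_N$ forces $\ket{\psi}$ to be a unit vector, so to speak of $S$ as a vector subspace one must read $S$ as the linear span of these vectors (equivalently the support subspace $L$ of the face), a convention I would state explicitly at the outset.
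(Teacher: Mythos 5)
Your proposal is correct, and its primary route is genuinely different from the paper's. The paper gives no separate proof environment for this theorem; its argument is the convexity/face computation displayed immediately before the statement: pure states in $\mathcal{S}\subseteq N_K$ are automatically common eigenvectors, and the identity $\frac{\pj{\psi_1}}{2}+\frac{\pj{\psi_2}}{2}=\frac{\pj{\psi_1+\psi_2}}{4}+\frac{\pj{\psi_1-\psi_2}}{4}$, combined with the face property, forces the normalized projectors onto $\psi_1\pm\psi_2$ into $\mathcal{S}$, which is possible only if $\psi_1,\psi_2$ have the same eigenvalue; this simultaneously closes $S$ under sums and produces the constants $c(g)$. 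You instead rest the argument on two external structural facts: the isotypic decomposition $\mathcal{H}_N=\bigoplus_\chi V_\chi$ coming from simultaneous diagonalization of the abelian group, and the classification of faces of $\mathcal{D}_N$ as the sets $\{\rho:\:\mathrm{supp}\,\rho\subseteq L\}$ for subspaces $L$. That classification is standard in finite dimension but is neither stated nor proved anywhere in the paper, so your proof leans on a heavier (if well-known) lemma; in exchange it delivers three things the paper's treatment leaves implicit: it identifies $S$ outright with a genuine linear subspace $L$ (resolving the unit-vector versus linear-span reading that you rightly flag as a defect of the literal definition of $S$), it makes the role of maximality explicit by forcing $L$ to be an entire block $V_\chi$ rather than a proper subspace of one, and it exhibits $c(g)=\chi(g)$ as a character of $K$ rather than a bare unimodular function. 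Your second paragraph, which re-derives the conclusion from the displayed face identity alone, is essentially the paper's own proof, so in effect you have supplied both routes: the block-decomposition argument is the more structural and complete one, while the paper's is the more elementary and self-contained one.
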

Note that the subspace $S$ determines $\mathcal{S}$ in the sense sense that the  1 dimensional projectors $\pj{\psi},\: \ket{\psi}\in S$ are the extreme points of $\mathcal{S}$ and we have a geometric characterization of stabilizer subspaces. This characterization may be extended to groups other than Pauli group. 
\subsection{ Quantum Search Algorithms} 
In this subsection I consider a problem which may be considered a generalization of the search problem \cite{Grover}. Suppose we prepare a quantum system in a state
\beq\label{eq:search_gen1}
\ket{\phi} = \sin{\alpha} \ket{\psi_1} + e^{i\theta} \cos{\alpha} \ket{\psi_2},\; \inp{\psi_1}{\psi_2} = 0
\eeq
The idea is that $\ket{\psi_1}$ represents the projection of $\ket{\phi}$ into the `search' subspace. We will assume that $0 < \alpha$ is `small' (e.g. $\alpha << \pi/4$). Consider a unitary operator $U$ that sends
 \[ \ket{\phi}\rightarrow \ket{\psi_1}\text{ and } \ket{\phi^\perp} \equiv e^{-i\theta} \cos{\alpha} \ket{\psi_1} - \sin{\alpha} \ket{\psi_2} \rightarrow \ket{\psi_2} \]
 The matrix for $U$ in $\{\ket{\psi_1},  \ket{\psi_2}\}$ basis is  given by
 \[
 U = \begin{pmatrix} \sin{\alpha} & e^{-i\theta}\cos{\alpha} \\ e^{i\theta}\cos{\alpha} & -\sin{\alpha} \end{pmatrix}
 \]
 $U$ is chosen to be hermitian for convenience. We may then formulate the general search problem as: find a minimal set of unitary operators $\{V_1, \dotsc, V_k\}$ from fixed family of quantum `gates' such that the operator $V_kV_{k-1} \dotsb V_1$ is `close' to $U$  , that is, $V_kV_{k-1} \dotsb V_1$  approximate $U$. We may use a metric or even a pseudometric to estimate the goodness of the approximation and also bounds on $k$. Let us illustrate it for the metric $d$. Consider a unitary operation in the plane spanned by $\{\ket{\psi_1},  \ket{\psi_2}\}$ of the form
 \[ V =  \begin{pmatrix} \cos{\gamma} & e^{-i\theta}\sin{\gamma} \\ -e^{i\theta}\sin{\gamma} & -\sin{\gamma} \end{pmatrix}
 \]
 Then 
 \[
 \begin{split}
 & d(U, V^k)  = d(I, U V^k) = d(I, W_k) \\
 &W_k = \begin{pmatrix} \sin{(\alpha+k\gamma)} & -e^{i\theta}\cos{(\alpha + k\gamma)} \\ e^{-i\theta}\cos{(\alpha+k\gamma)} & -\sin{(\alpha+k\gamma)} \end{pmatrix}\\
 \end{split}
 \]
 Since the eigenvalues of $W_k$ are $\exp(\pm i[\pi/2 - (\alpha+k\gamma)])$ using Theorem \ref{thm:eigenChar} we conclude that $d(U, V^k) = \cos{(\alpha+k\gamma)}$. Hence the approximation may be considered good if $k\gamma \approx \pi/2 - \alpha \approx \pi/2$ (recall $\alpha$ is `small'). Using the definition of the metric $d$ this would imply that 
\[ 
\norm{\inpr{\phi}{UV^k}{\phi}} =\norm{\inpr{\psi_1}{V^k}{\phi}}  \geq \sin{(k\gamma +\alpha)}. 
\] 
Since $\sin{(k\gamma +\alpha)} \sim 1$ the transition probability $\norm{\inpr{\psi_1}{V^k}{\phi}}^2$ between the `rotated' state $V^k\ket{\phi}$ and the `search target' $\ket{\psi_1}$ is close to 1. 
If $\alpha = O(1/\sqrt{N})$ and we take $\gamma = \alpha$ then $k = O(\sqrt{N})$. Indeed, the operator $V$ with $\gamma = \alpha$ may be implemented using an {\em oracle}. 
%%%%%%%%%%%%%%%%%%%%%%%%%%%%%%%%%%%%%%%%%%%%%%%%%%%%%%%%%%%%%%%%%%%%%%%%%%%%
\section{Concluding remarks}
We considered metrics and pseudometrics on `superoperators'. The main focus of investigation was (pseudo)metrics on groups of unitary operators induced by norms on the state space as these seem to have ready operational interpretations.  The three applications given in the preceding section indicate the rich potential of metrics {\em and} pseudometrics in quantum theory in general and quantum information processing in particular. Several interesting avenues for further explorations are clearly suggested. For example, one could study useful {\em pseudometrics} on the space of general superoperators. In addition, we could also investigate such structures on  product spaces and study interactions between the tensor structure and pseudometrics.  It may also prove useful to study pseudometrics on operators on infinite-dimensional spaces, a much more challenging task. 
\appendix*
\section{Proof of Theorem \ref{thm:eigenChar}}
\begin{proof}
From the definition above numerical range of $W$ is given by, 
\[ F_W = \{ \sum_i p_i z_i |\; 0 \leq p_i \leq 1 \text{ and } \sum p_i = 1 \}. \]
Moreover, we may now assume $z_i$ distinct in the convex sum. First suppose that the second condition in the theorem holds. Thus {\em every} arc containing all the $z_i$'s contains a semicircle. Let $C$ be smallest such arc.  We fix the orientation counter-clockwise. Then we can identify uniquely the endpoints of the arc as the ``starting'' and ``ending'' eigenvalues, say, $z_1= e^{i\theta_1}$ and $z_n= e^{i\theta_n}$. Replacing $W$ with $e^{i(2\pi- \theta_1)}W$ we may assume that $C$ contains the upper semicircle, $\theta_1 =0$ and $\alpha = \theta_n = \pi + \beta$ for some $0 \leq \beta < \pi$. Moreover, all other arguments satisfy $0 \leq \theta_i \leq \theta_n$ ($z_i = e^{i\theta_i}$). If, $\alpha = 0$ then the centre lies on the line joining $\theta_1$ and $\theta_n$ and hence in $F_W$ by convexity. So $d(0, F_W) = 0$ and $D(1, W) = 1$. Now assume $\alpha > 0$. There must be some eigenvalue $z_j= e^{i\theta_j}$ with $\alpha < \theta_j < \pi $ otherwise $C$ will not be smallest arc spanning all eigenvalues. Then the triangle $z_1z_jz_n$ is acute and contains the centre. By convexity again, $0 \in F_W$ and $D(1, W) = 1$. 

Next, suppose $C$ lies inside a semicircle. Since $D(1, W) = D(1, cW)$ for any complex number $c, \; |c|=1$ we may assume that $\theta_1 = 0$ and all the eigenvalues lie in the upper half plane. Then, $\alpha = \theta_n$. Again it is
intuitively clear that the line joining the points 1 and $e^{i\theta_n}$
contains the point of the polygon that is closest to the origin. To
prove it directly we have to show that
\[ \min \{|p_1+\sum_{i=2}^n p_i e^{i\theta_i} |^2 : \; 0 \leq p_i\text{ and }
\sum_i p_i=1 \} = \cos^2(\theta_n/2) \]
First, let $l$ be the line segment joining the points 1 and $z_n$. Then, it suffices
to prove that the line segment $l'$ joining the centre to an arbitrary
point of  $q$ the polygon intersects $l$ at an {\em internal}  point $z$ of
$l'$ (possibly an endpoint). For this would mean that $z$ is closer to the origin than $q$. So the point of the polygon that lies closest to the origin is the point of $l$ that is closest. The distance of this point from the origin is $\cos^2(\theta_n/2)$. Thus we have to show that, 
for any set of numbers \(\{p_1, \ldots, p_n\;|\; p_i\geq 0 \text{ and }\sum_ip_i
=1\}\) the equation
\beq \label{eq:intersect}
r\sum_i p_i e^{i\theta_i} = x e^{i\theta_1}+(1-x)e^{i\theta_n}, 
\eeq
has a unique solution with \(0\leq r \leq 1\text{ and } 0\leq x \leq 1\). As $\theta_1=0$ the above
equation is equivalent to the following pair of real equations.
\begin{flalign} \label{eq:pair} 
r(p_1+ \sum_{i=2}^n p_i\cos{\theta_i} ) & = x(1-\cos{\theta_n})+\cos{\theta_n} \\
r( \sum_{i=2}^n p_i \sin{\theta_i}) & = (1-x) \sin{\theta_n}
\end{flalign}
Hence
\beq\label{eq:intersect}
\begin{split}
x&=\frac{r(p_1+ \sum_{i=2}^n p_i\cos{\theta_i} )- \cos{\theta_n}}{(1-\cos{\theta_n})}\\
& = 1- r\frac{( \sum_{i=2}^n p_i \sin{\theta_i})} {\sin{\theta_n}} \\
\end{split}
\eeq
This implies, 
\[
\begin{split}
 & r \frac{(p_1+ \sum_{i=2}^n p_i\cos{\theta_i} )} {(1-\cos{\theta_n})}
+r\frac{( \sum_{i=2}^n p_i \sin{\theta_i})} {\sin{\theta_n}} =
  \frac{1}{1-\cos{\theta_n}}\\
& \Rightarrow  r =  \frac{\sin{\theta_n}} {(p_1\sin{\theta_n}+ \sum_{i=2}^{n-1}p_i(\sin{(\theta_n
    -\theta_i)}+ \sin{\theta_i})+p_n\sin{\theta_n}) } \\
\end{split}
\]
Let \( t = (p_1\sin{\theta_n}+ \sum_{i=2}^{n-1}p_i(\sin{(\theta_n -\theta_i)}+ \sin{\theta_i})+p_n\sin{\theta_n}) \). 
Since \(0=\theta_1\leq \theta_2 \leq \cdots \leq \theta_n <\pi \)  all $\sin{\theta_i}, \sin{(\theta_n - \theta_i)} \geq 0$. It follows that 
\[ 
\begin{split}
\sin {\theta_n} & = \sin{ (\theta_n - \theta_i) }\cos {\theta_i} +  \cos{ (\theta_n - \theta_i)} \sin {\theta_i} \\
& \leq  \sin{ (\theta_n - \theta_i) } + \sin {\theta_i} \\
\end{split}
\]
So $ t \geq p_1 \sin{\theta_n} + \sum_{i = 2}^n p_i \sin{\theta_n} = \sin{\theta_n} > 0 $. Thus $ 0 \leq r = \frac{\sin{\theta_n}}{t} \leq 1$. Substituting this value of $r$ in \eqref{eq:intersect} it is clear that $0 \leq x \leq 1$ and so $z$ is an internal point of $l'$ (and $l$). 
The theorem is proved. 
\end{proof}

\end{document}